\documentclass[journal]{IEEEtran}
\ifCLASSINFOpdf
\else
\fi
\usepackage[dvips]{color}
\usepackage{graphicx}

\usepackage[ruled,vlined,boxed]{algorithm2e}
\usepackage{multirow,epsfig,amsmath,amsfonts,cite,times}
\usepackage{graphicx}

\usepackage{subfigure}
\usepackage{algorithmic}
\usepackage{multirow}

\newtheorem{theorem}{Theorem}[section]
\newtheorem{lemma}[theorem]{Lemma}

\makeatletter

\hyphenation{op-tical net-works semi-conduc-tor}
\begin{document}

\title{An Efficient Topology-Based Algorithm for Transient Analysis of Power Grid}

\author{
Lan Yang, Jingbin Wang, Lorenzo Azevedo, and Jim Jing-Yan Wang
\thanks{
Lan Yang is with the School of Computer Science and Information Engineering, Chongqing Technology and
Business University, Chongqing 400067, China.}
\thanks{
Jingbin Wang is with the 
Tianjin Key Laboratory of Cognitive Computing and Application, Tianjin University,
Tianjin 300072, China.}
\thanks{Lorenzo Azevedo is with the 
Computer Science Department, University of S{\"a}o Paulo, S?o Paulo - SP, 05508-070,
Brazil}
\thanks{
Jim Jing-Yan Wang is with the Computer, Electrical and Mathematical Sciences and Engineering Division, King Abdullah University of Science and Technology (KAUST), Thuwal 23955-6900, Saudi Arabia.}
}

\markboth{Wang: An efficient topological-based power grid analysis algorithm using nodal analysis}%
{Wang: An efficient topological-based power grid analysis algorithm using nodal analysis}
\maketitle

\begin{abstract}
In the design flow of integrated circuits, chip-level verification is an important step that sanity checks the performance is as expected. Power grid verification is one of the most expensive and time-consuming steps of chip-level verification, due to its extremely large size. Efficient power grid analysis technology is highly demanded as it saves computing resources and enables faster iteration. In this paper, a topology-base power grid transient analysis algorithm is proposed. Nodal analysis is adopted to analyze the topology which is mathematically equivalent to iteratively solving a positive semi-definite linear equation. The convergence of the method is proved.
\end{abstract}

\begin{IEEEkeywords}
Power grid, topology, nodal analysis, positive semi-definite
\end{IEEEkeywords}

\IEEEpeerreviewmaketitle

\section{Introduction}

\IEEEPARstart{P}ower grid analysis is an indispensable step in modern chip simulation and verification. Due to the $IR$ and $Ldi/dt$ voltage drops, the actual voltages applied to the logic gates and other circuit elements are smaller than wished, which not only increases the delay of the logic gates but may also result in logic errors. As the ever decreasing supply voltages and threshold voltages make the problem worse, power grid analysis becomes an indispensable step in the simulation flow.

To capture the worst-case voltage drop and place decoupling capacitors (if necessary), DC analysis and/or transient analysis of the power grid should be performed. Frequency-domain analysis is not appropriate as the vital voltage drop in the time domain may be lost in the frequency-domain analysis. Both DC analysis and transient analysis (using forward/backwardcan be finally reduced to a linear solving problem

Traditional power grid analysis only considers the $IR$ voltage drop. As the operation frequency of chips is increasing rapidly, the $Ldi/dt$ voltage drop should not be neglected. Thus, the power grid model should consist inductors besides resistors and capacitors. The existence of inductors changes the state-space model from order-1 to order-2, hence introduces new difficulties to the transient analysis and its topology-based algorithms. It was proved in \cite{chen2001efficient} that the system matrix of the RLC model could be still positive definite, which had been thought to be impossible before the paper. However, some of its details are inaccurate. The right formulation will be proposed in Section \ref{sec_RLCcon}, which is the basis to prove the convergence of the topology-based iterative algorithm in Section \ref{sec_RLCalg}.

In this paper, a new topology-based algorithm is proposed. The Nodal analysis is adopted to analyze the netlist and generate positive semi-definite system model. Then a purely topology-base iterative method is introduced to solve the power grid analysis without constructing the system matrices. The convergence is proved and the complexity is analyzed.

\section{Background}
\subsection{Problem formulation}
Power grids are usually modeled as RLC circuits, or simpler RC circuits. The current sources attached to the nodes of the power grid represent the currents drawn from the logic gates or other devices. Hence prior to the simulation of the power grid, the current pattern of the gates and devices should be obtained and modeled as piecewise linear ideal current sources. The capacitors in parallel with the currents sources can be either parasitic capacitors or decoupling capacitors. The inductors most possibly appear at the vias connecting different layers and the metal layer connecting the power grid to external power supply. In many models the inductors can be neglected.

First consider the simpler RC power grid model, which can be depicted as a state-space model \cite{kouroussis2003static, wang2010mfti,lim2010ultra,lei2010vector}
\begin{equation}\label{eqn_RC}
C\frac{d}{dt}v(t)+Gv(t)=-i(t)+G_0V_{dd},
\end{equation}
where $v(t)$ is the vector of nodal voltages, $i(t)$ is the vector of current sources, $V_{dd}$ is vector of dimension $n$ with all its values equal to $V_{dd}$. $C$ is a diagonal matrix whose $(i,i)$ element is the capacitance between node $i$ and ground. $G$ is the conductance matrix with its diagonal $(i,i)$ element being the total conductance connected to node $i$ and off-diagonal $(i,j)$ element being the opposite number of the conductance between node $i$ and node $j$ (it is zero if node $i$ and node $j$ are unconnected). $G_0$ is a diagonal matrix with its $(i,i)$ element being the conductance between node $i$ and source (if node $i$ is connected to a source) or $0$ (if node $i$ is not connected to any source). To simplify the notation, $b(t)$ is used to represent $-i(t)+G_0V_{dd}$. Using backward Euler method, (\ref{eqn_RC}) can be approximated as
\begin{equation}\label{eqn_RCtran}
(\frac{C}{h}+G)x(t+h)=\frac{C}{h}x(t)+b(t+h),
\end{equation}
where $h$ represents the time step.
Thus given the initial condition $x(0)$, we can calculate all the $x(t)$ after $0$. Note that $\frac{C}{h}+G$ is always invertible if the system is stable. The reason is that stability implies that all the eigenvalues of the matrix pencil $(G,C)$ distribute in the left half complex plane and the positive real value $\frac{1}{h}$ cannot be an eigenvalue of $(G,C)$.

If we use system matrix $A$ to denote $\frac{C}{h}+G$ and $b$ to denote $\frac{C}{h}x(t)+b(t+h)$, (\ref{eqn_RCtran}) can be reduced to
\begin{equation}\label{eqn_RCsys}
Ax=b.
\end{equation}
In DC analysis, $A=G$ and $b$.

\subsection{SOR-like iterative method}
Although straightforward in theory, Eq. (\ref{eqn_RCtran}) are difficult to solve due to the extremely large size. In a standard power grid analysis problem, the number of nodes $n$ can be hundreds of thousands or even millions, thus the dimension of the matrices can be up to millions. This causes problems owing to the limited speed and memory of computers. Direct methods based on LU decomposition requires $O(n^3)$ time for the decomposition of the system matrix ($G$ or $A$) and $O(n^2)$ time for forward and backward substitution. The memory required is $O(n^2)$ even if $G$ or $A$ is sparse as the resulting matrices $L, U$ may become dense. Therefore direct methods are not applicable for extremely large problems.

Alternative methods include traditional iterative methods, the widely used PCG (preconditioned conjugate gradient) method and the Monte-Carlo-like random walk method, etc. PCG method, although converges fast, may be not applicable to extremely large problems as preconditioned system matrix may become dense and thus requires prohibiting $O(n^2)$ memory. Random walk algorithm is most suitable in the case that we only want to calculate the voltages at some specific nodes, but it may be inefficient for the full-circuit analysis, which is indispensable to find the worst-case voltage drop.

An iterative method was introduced in \cite{zhong2005fast}, which can be regarded as an efficient implementation of the traditional Gauss-Seidel iteration method and SOR (Successive Over-Relaxation) method. The convergence of this method in the DC analysis has been proven in \cite{zhong2005fast}. The advantage of this method is that it is topology-based and no matrix construction is needed. Hence it requires much less memory than PCG method. It is also shown that the method is faster than random walk in \cite{zhong2005fast}.

In the next sections, we will extend this SOR-like method to the transient analysis of RC and RLC power grid models and prove their convergence. In the RLC model analysis part, we will first show that the proof of positive definiteness in \cite{chen2001efficient} is inaccurate and then give the right version.

\section{Nodal Analysis of RLC circuits}
In the analysis of RLC circuits the most commonly used method is MNA (modified nodal analysis). It can be guaranteed that these system models are passive \cite{wang2010peds, wang2012passivity} By introducing extra variables for currents flowing through voltage sources and inductors, it can generate compact state-space models efficiently. However, when we perform transient analysis on such MNA models, the system matrix is not positive definite. As a result, many iterative algorithms (such as preconditioned conjugate gradient) cannot be applied since their convergence requires positive definiteness of the system matrix.

In \cite{chen2001efficient}, a new NA (nodal analysis) method was proposed to perform transient analysis on RLC circuits. It was proved in \cite{chen2001efficient} that the resulting system matrix is guaranteed to be positive definite. Although the idea in the paper is novel and inspiring, its formulations is inaccurate. The key equation (equation (8) in \cite{chen2001efficient}) is incorrect due to a wrong sign in equation (6). From another perspective, in equation (8), the current state of the nodal voltages only depends on the state of the previous step, which is impossible for a second-order system. In a second-order system like RLC circuits, the required initial conditions involve both $x(0)$ and $x'(0)$. In the discretized form, the state $x(t+\Delta t)$ should depend on both $x(t)$ and $x(t-\Delta t)$, which is not the case in \cite{chen2001efficient}. Besides, the analysis in \cite{chen2001efficient} did not consider voltage sources. Although Norton equivalent was used to convert voltages sources to current sources, we still wish to involve voltages sources from the beginning of the deduction, as Norton equivalent does not work for ideal voltage sources which we may want to include in the power grid model. We will give the correct deduction considering ideal voltage sources in the next subsection.

\subsection{System equation}

To simplify the deduction, assume that all the negative terminals of the voltage sources are connected to ground, which is just the case in the power grid analysis (it can be extended to ground network analysis straightforwardly). Assume that there are $n$ nodes that are neither ground nor positive terminals of voltage sources in the power grid model (called trivial nodes), numbered from $1$ to $n$, together with $p$ nodes being the positive terminals of voltage sources (called source nodes), numbered from $1$ to $p$. Besides, assume there are $m$ branches not through voltage sources, numbered from $1$ to $m$. Define the $m$ by $n$ modified incidence matrix $A$ as
\begin{equation}\label{eqn_incid}
A(i,j)=\left\{ {\begin{array}{*{20}{c}}
   { + 1} ,~~~\mbox{if trival node $j$ is the source of branch $i$;} \\
   { - 1} ,~~~\mbox{if trival node $j$ is the sink of branch $i$;} \\
   0 ,~~~ \mbox{otherwise.} \\
\end{array}} \right.
\end{equation}

Similarly, define the $m$ by $p$ source incidence matrix $A_s$ as
\begin{equation}\label{eqn_sincid}
A_s(i,j)=\left\{ {\begin{array}{*{20}{c}}
   { + 1} ,~~~\mbox{if source node $j$ is the source of branch $i$;} \\
   { - 1} ,~~~\mbox{if source node $j$ is the sink of branch $i$;} \\
   0 ,~~~ \mbox{otherwise.} \\
\end{array}} \right.
\end{equation}

The advantage of splitting the traditional incidence matrix to the modified incidence matrix and the source incidence matrix is that no extra current variables through the voltage sources needs to be introduced, which facilities the nodal analysis and is the basis of the positive definiteness to be proved next. By grouping the branches with resistors, inductors, capacitors and current sources together we obtain
\begin{equation}\label{eqn_group}
A = \left[ {\begin{array}{*{20}{c}}
   {{A_g}}  \\
   {{A_c}}  \\
   {{A_l}}  \\
   {{A_i}}  \\
\end{array}} \right], {A_s} = \left[ {\begin{array}{*{20}{c}}
   {{A_{gs}}}  \\
   {{A_{cs}}}  \\
   {{A_{ls}}}  \\
   {{A_{is}}}  \\
\end{array}} \right], {v_b} = \left[ {\begin{array}{*{20}{c}}
   {{v_g}}  \\
   {{v_c}}  \\
   {{v_l}}  \\
   {{v_i}}  \\
\end{array}} \right], {i_b} = \left[ {\begin{array}{*{20}{c}}
   {{i_g}}  \\
   {{i_c}}  \\
   {{i_l}}  \\
   {{i_i}}  \\
\end{array}} \right],
\end{equation}
where $i_b$ is the vector of branch currents, $v_b$ is the vector of branch voltages, $g$, $c$, $l$, $s$ represent resistors, capacitors, inductors and current sources respectively.

Applying Kirchoff's current and voltage laws, we obtain
\begin{equation}\label{eqn_kir}
\begin{gathered}
A^Ti_b=0,\\
Av_n=v_b-A_sv_d,
\end{gathered}
\end{equation}
where $v_n$ is the vector of nodal voltages (other than the nodes being the terminals of voltage sources) and $v_d$ is the (constant) vector of voltage sources.

Besides, the branch currents and branch voltages follow the following branch equations (grouped by branch elements)
\begin{equation}\label{eqn_branch}
\begin{gathered}
i_g=\mathcal{G}v_g,\\
i_c=\mathcal{C}\frac{dv_c}{dt},\\
v_l=\mathcal{L}\frac{di_l}{dt},\\
i_i=I_s,\\
\end{gathered}
\end{equation}
where $\mathcal{G}$, $\mathcal{C}$, $\mathcal{L}$ are diagonal matrices with its diagonal elements being the value of corresponding branch resistance, capacitance and inductance, $I_s$ is the vector of branch current sources.

Combine (\ref{eqn_kir}) and (\ref{eqn_branch}), we obtain $2m+n$ equations of $2m+n$ variables. Substitute the second equation of (\ref{eqn_kir}) to (\ref{eqn_branch}) and then substitute the new (\ref{eqn_branch}) to the first equation of (\ref{eqn_kir}), we have
\begin{equation}\label{eqn_simp}
\begin{gathered}
C\frac{dv_n}{dt}+Gv_n+A_l^Ti_l+A_i^TI_s-G_sv_d=0,\\
L\frac{di_l}{dt}-A_lv_n+A_{ls}v_d=0.
\end{gathered}
\end{equation}
Here $G=A_g^T\mathcal{G}A_g$, $C=A_c^T\mathcal{C}A_c$, $G_s=A_g^T\mathcal{G}A_{gs}$. Use backward Euler law to discretize (\ref{eqn_simp}),

\begin{equation}\label{eqn_discrete}
\begin{gathered}
C\frac{v_n(t+h)-v_n(t)}{h}+Gv_n(t+h)+A_l^Ti_l(t+h)\\
~~~~~~~~~~~~~~~~~~~~~~~~~+A_i^TI_s(t+h)-G_sv_d=0,\\
\mathcal{L}\frac{i_l(t+h)-i_l(t)}{h}-A_lv_n(t+h)+A_{ls}v_d=0.\\
\end{gathered}
\end{equation}
Here $h$ is the selected time step length. Rewrite the first equation of (\ref{eqn_discrete}) at time $t$ and subtract it from the original equation, then substitute the second equation of (\ref{eqn_discrete}) to it we obtain

\begin{equation}\label{eqn_systemeqn}
\begin{gathered}
\left( \frac{C}{h}+G+hL \right)v_n(t+h)=\left( \frac{2C}{h}+G \right) v_n(t)-\frac{C}{h}v_n(t-h)\\
~~~~~~~~~~~~~~~~~~~~~~~~+hL_sv_d-A_i^T\left(I_s(t+h)-I_s(t)\right).
\end{gathered}
\end{equation}
Here $L=A_l^T\mathcal{L}^{-1}A_l$, $L_s=A_l\mathcal{L}^{-1}A_{ls}$. $\mathcal{L}$ is a diagonal matrix with all its diagonal elements being nonzero, hence it is invertible. $\left( \frac{C}{h}+G+hL \right)$ is called system matrix.

\subsection{Positive definiteness}

In the RLC power grid model analysis, some assumptions on the topology are made as summarized below.

\begin{enumerate}
\item All the voltage sources have their negative terminals as ground;
\item Any trivial node is connected to other non-ground nodes by one or more branches and such branches can not be all current sources.
\item For any pair of trivial nodes $i$ and $j$, there exists at least one path from $i$ to $j$ and the path passes through trivial nodes only (it does not pass through source nodes).
\end{enumerate}

The first assumption is straightforward and just the case in power grid models. The second assumption is also true since every node in the power grid model is connected to other non-ground nodes by at least one resistor (or capacitor, or inductor). The third assumption may be not satisfied for some special cases. However, in these cases the power grid can be separated to several sub-circuits which are independent of each other. Consider the case in Fig. (\ref{pic_special}). The circuit can be divided to two independent parts, sub-circuit 1 and sub-circuit 2. In the analysis of each sub-circuit, the third assumption is satisfied.

\begin{figure}[t]
\centering
\includegraphics[width=3.5in]{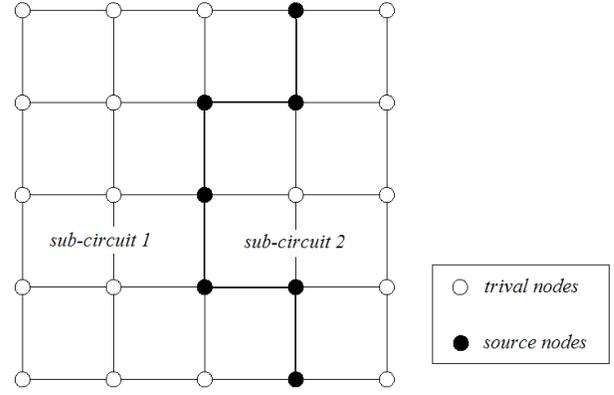}
\caption{A special circuit model which can be divided into two sub-circuits}\label{pic_special}
\end{figure}

\begin{lemma}\label{lem_pos}
Under the topological assumptions for the circuits, the resulting system matrix $M=\left( \frac{C}{h}+G+hL \right)$ is positive definite.
\end{lemma}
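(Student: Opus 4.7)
The plan is to write $M$ as a manifestly positive semi-definite sum and then rule out a nontrivial null space using the three topological assumptions.

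First I would use the definitions $G=A_g^T\mathcal{G}A_g$, $C=A_c^T\mathcal{C}A_c$, and $L=A_l^T\mathcal{L}^{-1}A_l$ to rewrite
\begin{equation*}
M=\tfrac{1}{h}A_c^T\mathcal{C}A_c+A_g^T\mathcal{G}A_g+h\,A_l^T\mathcal{L}^{-1}A_l.
\end{equation*}
Since $h>0$ and $\mathcal{G}$, $\mathcal{C}$, $\mathcal{L}^{-1}$ are diagonal with strictly positive entries (the branch conductances, capacitances and inverse inductances), each summand has the Gram form $B^TDB$ with $D\succ 0$ and is therefore positive semi-definite; hence so is $M$.

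Next, to upgrade this to strict positive definiteness, I would take any $x$ with $x^TMx=0$. Since $x^TMx$ is the sum of three non-negative quadratic forms, each must vanish separately, and the positive definiteness of the diagonal matrices then forces $A_gx=A_cx=A_lx=0$. Reading these equations row by row produces a combinatorial statement on the power-grid graph: for every R/C/L branch whose endpoints are two trivial nodes $u$ and $v$, one has $x_u=x_v$; for every R/C/L branch joining a trivial node $u$ to a voltage-source terminal or to ground (neither of which contributes a column to $A$), the row reduces to $\pm x_u=0$ and hence $x_u=0$.

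Finally I would invoke assumptions 1--3 to propagate these equalities and zeros to every trivial node. The equalities $x_u=x_v$ make $x$ constant on each R/C/L-connected component of the trivial-node subgraph; assumption 3 collapses this into a single component; and assumption 2 (together with assumption 1, which places the source terminals outside the vector $v_n$) guarantees that at least one R/C/L branch leaves this component toward a source node or ground, pinning the constant to zero. Hence $x=0$ and $M$ is positive definite.

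The main obstacle I anticipate lies in this last combinatorial step rather than in the algebra. Current-source branches do not appear in any of $A_g$, $A_c$ or $A_l$ and therefore impose no constraint on $x$, so one has to argue carefully that they cannot bridge two otherwise disconnected R/C/L-components of the trivial-node graph; this is exactly the pathology ruled out by the special-case figure referenced in the paper, and it is where the three topological assumptions must be invoked with the most care.
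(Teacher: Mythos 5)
Your proof is correct but follows a genuinely different route from the paper's. The paper does not use the Gram decomposition at all: it observes that $M$ is symmetric, that $G(i,i)$ is the total conductance at trivial node $i$ while $\sum_{j\neq i}|G(i,j)|$ counts only the conductances to other \emph{trivial} nodes (so $|G(i,i)|\geq\sum_{j\neq i}|G(i,j)|$, and likewise for $C$ and $hL$), hence $M$ is weakly diagonally dominant with nonnegative diagonal; it then uses assumption 3 to conclude $M$ is irreducible and cites the classical theorem in Young's book on irreducibly diagonally dominant symmetric matrices. Your sum-of-Gram-forms argument $M=\tfrac{1}{h}A_c^T\mathcal{C}A_c+A_g^T\mathcal{G}A_g+hA_l^T\mathcal{L}^{-1}A_l$ followed by a direct null-space analysis is more elementary and self-contained (no external theorem needed), and it makes visible exactly where each topological assumption is used. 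The caveat you flag is real, but note it afflicts the paper's version in exactly the same way: a path permitted by assumption 3 could in principle run through current-source branches, which contribute nothing to $G$, $C$, or $L$, so the paper's claim that assumption 3 yields irreducibility of $M$ tacitly assumes those paths use R/C/L branches --- precisely the issue you isolate as the bridging of R/C/L components by current sources. Similarly, both proofs need at least one trivial node with an R/C/L branch to a source node or to ground (your ``anchor,'' the paper's row of strict dominance required for the irreducible-dominance theorem); assumption 2 as literally stated only guarantees an R/C/L branch to some non-ground node, so both arguments lean on the implicit modeling fact that the grid is actually tied to the supply. In short: your approach trades the citation for a transparent, slightly longer argument, and in doing so surfaces two hypotheses that the paper's proof uses silently.
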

\begin{proof}
It is obvious that $M$ is symmetric. And according to the definition of $A_g$, $G(i,i)$ is the sum of conductances connected to trivial node $i$, $\sum\limits_{j\neq i}|G(i,j)|$ is the sum of conductances connected to trivial node $i$ and not connected to source nodes. Hence $|G(i,i)|\geq \sum\limits_{j\neq i}|G(i,j)|$. The same is true for $C$ and $L$, which indicates the system matrix is weakly diagonally dominant with nonnegative diagonal elements.

Besides, according to the topological assumption, any two trivial nodes are connected by a path crossing only trivial nodes. This results in the system matrix to be irreducible. Use the well-known theorem in \cite{young2003iterative}, we conclude that the system matrix $M$ is positive definite.
\end{proof}

\section{Topology-based Transient Analysis of Power Grid}

\subsection{Voltage Update}\label{sec_RLCalg}

\begin{figure}[t]
\centering
\includegraphics[width=2.5in]{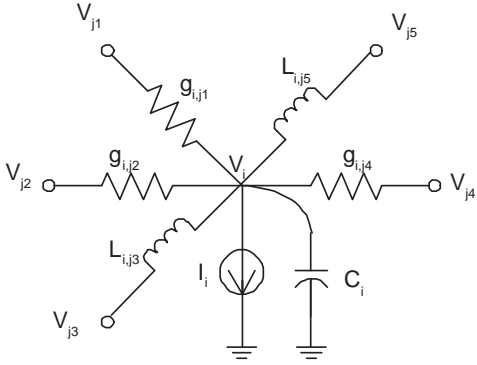}
\caption{A representative node of the RLC power grid}\label{pic_RLCnode}
\end{figure}

Consider a representative node in the RLC power grid model, as shown in Fig. \ref{pic_RLCnode}. Applying Kirchoff's current law at node $i$ we have
\begin{equation}\label{eqn_RLCkir0}
\sum\limits_{j\in N_i^R} g_{ij}(V_i-V_j)+\sum\limits_{j\in N_i^L} I_{ij} +I_i +C_i\frac{dV_i}{dt}=0.
\end{equation}
Here $I_{ij}$ is the current from node $i$ to node $j$, $g_{ij}$ is the conductance between node $i$ and node $j$, $N_i^R$ ($N_i^L$) is the set of indices of nodes connected to node $i$ by a resistor (inductor). Note that $N_i^R\bigcap N_i^L$ may be nonempty. Take derivative on both sides of (\ref{eqn_RLCkir0}) and substitute the branch equations through inductors into it, we obtain
\begin{equation}\label{eqn_RLCkir2}
\sum\limits_{j\in N_i^R} g_{ij}(\frac{dV_i}{dt}-\frac{dV_j}{dt})+\sum\limits_{j\in N_i^L} \frac{1}{L_{ij}} (V_i-V_j) +\frac{dI_i}{dt} +C_i\frac{d^2 V_i}{dt^2}=0.
\end{equation}

\begin{table*}[t]
\scriptsize
\begin{equation}\label{eqn_RLCnode}
\left( \sum\limits_{j\in N_i^R} g_{ij}+h\sum\limits_{j\in N_i^L} \frac{1}{L_{ij}}+\frac{C_i}{h} \right) V_i(t+h)
= \sum\limits_{j\in N_i^R} g_{ij} V_j(t+h) + \sum\limits_{j\in N_i^L} \frac{h}{L_{ij}} V_j(t+h)
+\left(\frac{2C_i}{h}+\sum\limits_{j\in N_i^R}g_{ij} \right)V_i(t)
-\sum\limits_{j\in N_i^R} g_{ij}V_j(t)-I_i(t+h)+I_i(t)-\frac{C_i}{h}V_i(t-h).
\end{equation}
\normalsize
\end{table*}

Here $L_{ij}$ denotes the inductance between node $i$ and node $j$. Using backward Euler method, (\ref{eqn_RLCkir2}) can be discretized  to (\ref{eqn_RLCnode}), as shown below (??in the next page??). Write (\ref{eqn_RLCnode}) more compactly as
\begin{equation}\label{eqn_RLCnodecomp}
V_i(t+h)=\sum\limits_{j\in N_i} w_{ij}V_j(t+h) +K_i(t+h,t,t-h),
\end{equation}
where $w_{ij}$ represents the coefficient of $V_j(t+h)$ and $K_i(t+h,t,t-h)$ is a known value corresponding to currents at $t+h, t$ and nodal voltages $t,t-h$. Based on (\ref{eqn_RLCnodecomp}), we can use an iterative method to solve the power grid model. In each iteration, update the nodal voltage at node $i$ by
\begin{equation}\label{eqn_RLCupdate}
\begin{gathered}
V_i^{(k)}(t+h)=\sum\limits_{j\in L_i}w_{ij}V_j^{(k)}(t+h)+\sum\limits_{j\in U_i}w_{ij}V_j^{(k-1)}(t+h)\\+K_i(t+h,t,t-h).
\end{gathered}
\end{equation}
Here $L_i (U_i)$ is defined as the set of indices that satisfy (i) the corresponding nodes are connected to node $i$ and (ii) they are smaller (larger) than $i$. $k$ denotes the iteration number.

\subsection{Initial conditions}
As it can be seen from (\ref{eqn_systemeqn}) and (\ref{eqn_RLCupdate}), the nodal voltages at $t+h$ depend on the circuit states at both $t$ and $t-h$, which is the requirement of the ``second-order'' characteristic of the circuit. If the initial conditions $v_c(0)$ and $i_l(0)$ are given, we can calculate $v_n(t)$ at any time after $0$ by iteratively updating nodal voltages using (\ref{eqn_RLCupdate}). Given the initial conditions $v_n(0)$ and $i_l(0)$, $v_n(0)$ and $v_n(h)$ can be calculated in three steps.

\begin{enumerate}
\item Treat capacitors as voltage sources whose voltages are $v_c(0)$, inductors as current sources whose currents are $i_l(0)$. Perform iterative DC analysis as \cite{zhong2005fast} at time $0$, obtain $v_n(0)$, $i_c(0)$ and $v_l(0)$;
\item Calculate $v_c(h)$ and $i_l(h)$ according to $\mathcal{C}\frac{v_c(h)-v_c(0)}{h}=i_c(0)$, $\mathcal{L}\frac{i_l(h)-i_l(0)}{h}=v_l(0)$;
\item Perform iterative DC analysis at time point $h$ and obtain $v_n(h)$.
\end{enumerate}

After the three steps we obtain the voltages of trivial nodes ($v_n(0)$ and $v_n(h)$), which are the basis of the topology-based transient analysis algorithm in the next subsection.

\subsection{Algorithm Description}
The topology-based algorithm is proposed as Algorithm \ref{alg_RLC}. No matrix construction or manipulation is required in the algorithm, which dramatically reduces the storage memory and computation time. In the algorithm, DC analysis is performed first to obtain the initial conditions. Then, the nodal voltages at each time step is solved iteratively, based on the information at the previous two time steps. The vectors $V(s)$ and $I(s)$ represent the nodal voltages and current sources at time $s\times h$. The convergence and computational complexity of the proposed algorithm are discussed in Section \ref{sec_RLCcon}.

\begin{algorithm}\label{alg_RLC}
\caption{Topology-based transient analysis of power grid}
\KwIn{Initial capacitor voltages $v_c(0)$, inductor currents $i_l(0)$, step length $h$, maximum step $s_{total}$, currents sources $I(s)$ ($s=0,1,...,s_{total}$), error tolerance $tol$;}
\KwOut{Nodal voltages $V(s)$ ($s=0,1,...,s_{total}$);}

\begin{algorithmic}[1]
\STATE Calculate $V(0)$ and $V(1)$ using iterative DC analysis;
\FOR {$s=2$,$...$,$s_{total}$}
\STATE $V^{(0)}(s)=V(s-1)$;
\STATE $k=0$;
\REPEAT
\STATE $k=k+1;$
\STATE Update $V_i^{(k)}(s)$ using equation (\ref{eqn_RLCupdate});
\UNTIL {$\|V^{(k)}(s)-V^{(k-1)}(s)\|<tol$}
\STATE $V(s)=V^{(k)}(s);$
\ENDFOR

\end{algorithmic}

\end{algorithm}

\subsection{Convergence}\label{sec_RLCcon}

\begin{theorem}\label{thm_conv}
The solution $V(s)$ ($s=0,1,...,s_{total}$) of Algorithm \ref{alg_RLC} converges to the accurate nodal voltages $v_n(s\times h)$.
\end{theorem}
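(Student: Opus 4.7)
The plan is to recognize that the inner iteration (\ref{eqn_RLCupdate}) is exactly the Gauss-Seidel method applied to the symmetric linear system $M v_n(t+h) = b_t$, where $M = \frac{C}{h} + G + hL$ and $b_t$ is the right-hand side of (\ref{eqn_systemeqn}). First I would divide the diagonal coefficient $\sum_{j \in N_i^R} g_{ij} + h\sum_{j \in N_i^L} 1/L_{ij} + C_i/h$ (which is exactly $M_{ii}$) through in (\ref{eqn_RLCnode}); the resulting coefficients $w_{ij}$ are then seen to equal $-M_{ij}/M_{ii} \geq 0$, so (\ref{eqn_RLCupdate}) is literally the row-by-row Gauss-Seidel splitting, with $L_i$ using already-updated iterates and $U_i$ using the previous ones.

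Next I would invoke Lemma \ref{lem_pos}: $M$ is symmetric positive definite (and in fact irreducibly weakly diagonally dominant with nonnegative diagonal) under the three topological assumptions. By the classical Ostrowski--Reich theorem, or equivalently the Young-type theorem for irreducibly diagonally dominant matrices already cited in Lemma \ref{lem_pos}, the Gauss-Seidel iteration converges to the unique solution of $M x = b_t$ from any starting vector. Hence at each outer step $s$ one has $V^{(k)}(s) \to v_n(sh)$ as $k \to \infty$, provided the right-hand side $b_t$, which depends on $V(s-1)$ and $V(s-2)$, is itself exact.

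To propagate correctness across outer steps I would induct on $s$. The base case $s = 0, 1$ is handled by the three-step initialization procedure: the two DC analyses at $t = 0$ and $t = h$ converge by \cite{zhong2005fast}, and the explicit updates $\mathcal{C}(v_c(h)-v_c(0))/h = i_c(0)$ and $\mathcal{L}(i_l(h)-i_l(0))/h = v_l(0)$ correctly produce $v_c(h)$ and $i_l(h)$ consistent with the backward-Euler discretization. For the inductive step, assuming $V(s-1) = v_n((s-1)h)$ and $V(s-2) = v_n((s-2)h)$, the assembled $b_t$ matches the true right-hand side of (\ref{eqn_systemeqn}), and the convergence of Gauss-Seidel established above gives $V(s) \to v_n(sh)$.

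The main obstacle I expect is not the algebraic identification with Gauss-Seidel, which is essentially mechanical, but rather the cleanest way to handle the finite stopping tolerance $tol$ of the inner loop: the algorithm terminates at an approximate rather than exact fixed point, so in principle residual errors accumulate across time steps through the right-hand sides. I would address this either by stating the theorem in the idealized limit $tol \to 0$, or by showing that $\|M^{-1}\|$ admits a uniform bound (guaranteed by positive definiteness of $M$ together with the topological assumptions bounding it away from singularity), so that per-step errors contract and the total accumulated error after $s_{total}$ steps is $O(s_{total}\cdot tol)$ and vanishes as $tol \to 0$. Everything else reduces to standard convergence machinery for symmetric positive definite linear systems.
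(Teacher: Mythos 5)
Your proposal follows essentially the same route as the paper: identify the update (\ref{eqn_RLCupdate}) as Gauss--Seidel applied to the system (\ref{eqn_systemeqn}), and invoke the symmetric positive definiteness from Lemma \ref{lem_pos} together with the Ostrowski--Reich/Golub--Van Loan convergence theorem. Your additional care about the induction over time steps and the finite stopping tolerance $tol$ goes beyond what the paper writes down, but the core argument is identical.
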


\begin{proof}
Compare (\ref{eqn_RLCnodecomp}) and (\ref{eqn_RLCupdate}) with (\ref{eqn_systemeqn}), we can see that Algorithm \ref{alg_RLC}
is equivalent to the Gauss-Seidel iterative solution of the matrix equation (\ref{eqn_systemeqn}). Because the system matrix is symmetric positive definite, the Gauss-Seidel iteration converges (refer to Theorem 10.2.1 of \cite{golub1996matrix}). Therefore Algorithm \ref{alg_RLC} is guaranteed to converge.
\end{proof}

We can employ the SOR (successive over-relaxation) method to accelerate the convergence of Algorithm \ref{alg_RLC}. Using the SOR-like method, the nodal voltage updating formula (\ref{eqn_RLCupdate}) is adapted to
\begin{equation}\label{eqn_SOR}
\begin{gathered}
V_i^{(k)}(t+h)=\omega\tilde{V}_i^{(k)}(t+h)+(1-\omega) V_i^{(k-1)}(t+h).
\end{gathered}
\end{equation}
Here $\tilde{V}_i^{(k)}(t+h)$ is the updated voltage calculated through (\ref{eqn_RLCupdate}). $\omega$ is called relaxation parameter. Using (\ref{eqn_SOR}) to update the nodal voltage, we obtain a new algorithm (named as Algorithm 2). If $0<\omega<2$, Algorithm 2 is guaranteed to converge. With appropriately chosen $\omega$, the convergence procedure can be accelerated. We refer the readers to \cite{young1970convergence}, \cite{young1972generalizations} for the optimal choice of $\omega$.

\section{Conclusion}
A topology-base power grid transient analysis algorithm has been proposed. Nodal analysis has been adopted to analyze the topology which is mathematically equivalent to iteratively solving a positive semi-definite linear equation. The convergence of the method has been proved.


\end{document}